\newtheorem{theorem}{Theorem}
\newcommand{\ve}[1]{\mathbf{#1}}           
\newcommand{\m}[1]{\mathbf{#1}}               
\newcommand{\tr}[1]{{#1}^{\mkern-1.5mu\mathsf{T}}}              
\newcommand{\conj}[1]{{#1}^{\ast}}
\newcommand{\norm}[1]{||{#1}||}              
\newcommand{\frob}[1]{\norm{#1}_F}
\newcommand{\abs}[1]{\lvert{#1}\rvert}              
\newcommand*{\trace}{\operatorname{trace}}
\newcommand*{\diag}{\operatorname{diag}}
\newcommand{\widebar}[1]{\overline{#1}}
\newcommand{\field}[1]{\mathbb{#1}}
\newcommand{\Complex}{\field{C}}
\begin{document}

\title{Optimal Structured Matrix Approximation for Robustness to Incomplete Biosequence Data}

\author{Chris Salahub, Jeffrey Uhlmann
\thanks{Chris Salahub is a PhD student in Statistics at the University of Waterloo, Jeffrey Uhlmann is an Associate Professor in Electrical Engineering and Computer Science at the University of Missouri.}}

\maketitle

\begin{abstract}
 We propose a general method for optimally approximating an arbitrary matrix $\m{M}$ by a structured matrix $\m{T}$ (circulant, Toeplitz/Hankel, etc.) and examine its use for estimating the spectra of genomic linkage disequilibrium matrices. This application is prototypical of a variety of genomic and proteomic problems that demand robustness to incomplete biosequence information. We perform a simulation study and corroborative test of our method using real genomic data from the Mouse Genome Database \cite{bultetal2019mouse}. The results confirm the predicted utility of the method and provide strong evidence of its potential value to a wide range of bioinformatics applications. Our optimal general matrix approximation method is expected to be of independent interest to an even broader range of applications in applied mathematics and engineering.
\end{abstract}

\begin{IEEEkeywords}
Structured matrix, circulant matrix, Toeplitz matrix, Frobenius norm, optimal approximation, genetics, linkage diseqbuilibrium.
\end{IEEEkeywords}

\section{Introduction}

\IEEEPARstart{L}{arge} matrices are ubiquitous to Big Data applications, especially in bioinformatics, to encode structural relationships among components of complex systems. These relationships may be physical in the case of connectivities among molecular components of genomic or proteomic sequences \cite{bai2010spectral,proteomics,natureProteo}, or they may represent statistical relationships encoded in covariance matrices \cite{covZhou,biometrika,baik2006eigenvalues,li2012two}. Matrix encodings permeate virtually all areas of science and engineering because they admit mathematical operations that can reveal subtle but important properties of a system that otherwise would be practically impossible to discern. Unfortunately, the size of matrices that arise in many bioinformatics applications are extremely large, ranging from $n\times n$ matrices with $n$ in the thousands up to tens of millions \cite{7440788}. This introduces enormous challenges both because virtually all nontrivial matrix operations have $O(n^3)$ complexity\footnote{Theoretical algorithms \cite{STRASSEN,CoppersmithW90} exist that reduce the exponent from 3 to as low as 2.36 \cite{JV21}, but the improved theoretical complexity does not translate to improved practical efficiency even for the largest matrices encountered in current applications \cite{galactic}.} and uncertainty in data collection frequently makes these matrices incomplete. 

In many cases, the computational and statistical costs associated with these large matrices are prohibitive unless practical approximations can be found. The most natural and common approach is to simply perform analysis on a small subset of the data, e.g., by global downsampling \cite{downsample}. This approach may be effective in applications involving statistics that will be preserved, but at a cost of coarser resolution in the form of larger error variances. However, subsampling cannot be performed for data points corresponding to critical features that thus cannot be entirely discarded. In this case, the only option may be to transform the dataset to one of lower complexity that retains approximate, though degraded, information about all features \cite{sparsification}. An example would be reduction of a large transportation network by removing roads for which alternative routes are known to exist so that effective route planning may be performed, though with potentially suboptimal results. 

In this paper we describe a general mechanism for approximating an arbitrary $n\times n$ matrix, i.e., having $O(n^2)$ parametric dimensionality, with a structured $n\times n$ matrix which is defined by as few as $O(n)$ parameters. In certain cases, this comes with enormous computational and memory savings for most matrix operations, e.g., spectral decompositions are reduced from $O(n^3)$ to $\tilde{O}(n)$, where $\tilde{O}$ hides logarithmic and polylogarithmic factors. For example, all common matrix operations performed on a circulant matrix have $O(n\log(n))=\tilde{O}(n)$ time complexity \cite{gray2006toeplitz}, as opposed to $O(n^3)$ for a general matrix. The first key question is whether optimal structured approximations for arbitrary matrices can be constructed efficiently. The second key question is whether the resulting approximations retain properties relevant to practical applications. The first question is answered in the following section, and the second question is subsequently addressed in Section \ref{sec:examples} with a study involving genomic analysis using data from the Mouse Genome Database \cite{bultetal2019mouse}.

\section{Structured Matrix Approximation}

Suppose we would like to approximate the $n \times n$ matrix
\begin{equation} \label{eq:Mdefn}
  \m{M} = \begin{bmatrix}
    m_{00} & m_{01} & \dots & m_{0,n-1} \\
    m_{10} & m_{11} & \dots & m_{1,n-1} \\
    \vdots & \vdots & \ddots & \vdots \\
    m_{n-1,0} & m_{n-1,1} & \dots & m_{n-1,n-1}
  \end{bmatrix}
\end{equation}
$m_{ij} \in \Complex$ with a structured matrix $\m{T} \in \Complex^{n \times n}$ for computational or analytical reasons, e.g., circulant $\m{T}$ for preconditioning \cite{chan1988optimal, venkatapathi2021circulant} and Toeplitz-Hankel $\m{T}$ for physical modelling \cite{narayanshastry2021toeplitz}. For any application, we prefer the approximating matrix $\m{T}$ to be optimal by some measure, commonly the Frobenius norm of the difference $\m{T} - \m{M}$ defined as
 \IEEEpubidadjcol

\begin{equation} \label{eq:frobnorm}
  \frob{\m{T} - \m{M}} = \sqrt{\trace \left ( \conj{(\m{T} - \m{M})}(\m{T} - \m{M}) \right )},
\end{equation}
where $\conj{\m{A}}$ is the conjugate matrix of $\m{A} \in \Complex^{n \times n}$. Minimizing this quantity for a circulant $\m{T}$ was the express goal of \cite{chan1988optimal} and was noted as a positive feature of the approximation in \cite{venkatapathi2021circulant}. Here we present a general result that can construct an optimal structured matrix approximation to an arbitrary given general matrix. 

\subsection{Structured matrices} \label{sec:struc}

We say $\m{T}  \in \Complex^{n \times n}$ is a structured matrix if it has entries $t_{ij}$, $0 \leq i,j \leq n-1$, following a pattern in $i$ and $j$, that is if
\begin{equation} \label{eq:generalStruc}
  t_{ij} = t_{f(i,j)}
\end{equation}
where $f:\{0,1, \dots, n-1\}^2 \mapsto \{0, 1, 2, \dots, K\}$ is the index function defining the membership of the index pair $i,j$ to an index set with a constant value. This implies a $k^{\text{th}}$ index set $\mathcal{T}_k = \{(i,j) | f(i,j) = k\}$ with cardinality $\lvert \mathcal{T}_k \rvert = n_k > 0$.

The index function $f(\cdot,\cdot)$ defines the structure of $\m{T}$ by defining the $\mathcal{T}_k$. Common structures and corresponding index functions are shown in Table \ref{tab:indexfuns}, though these functions are not unique for a given structure. Many candidate functions define identical index sets, for example Hankel matrices can take either $f(i,j) = j + i$ or $f(i,j) = 2(n-1) - j - i$.

\begin{table}[!h]
  \caption{Some common examples of structured index functions.} \label{tab:indexfuns}
  \begin{center}
  \begin{tabular} {|c|c|} \hline
    Structure & $f(i,j)$ \\ \hline
    Circulant & $(i - j) \bmod n$ \\
    Toeplitz & $j - i + n$ \\
    Hankel & $i + j$ \\ \hline
  \end{tabular}
  \end{center}
\end{table}

\subsection{Optimizing the Frobenius norm} \label{sec:approx}

Using the notation defined above, we obtain the following theorem.

\begin{theorem}[Means minimize $\frob{\m{T} - \m{M}}$] \label{thm:genOptimal}
  The Frobenius optimal approximating structured matrix $\m{T}$ with index function $f(i,j)$ for $\m{M}$ is given by $\m{T}_M$ with
  \begin{equation} \label{eq:meanStrucMat}
    t_{ij} = t_{f(i,j)} = \widebar{m}_{f(i,j)}
  \end{equation}
  where
  \begin{equation} \label{eq:strucMeans}
    \widebar{m}_k := \frac{1}{n_k} \sum_{\mathcal{T}_k} m_{ij}.
  \end{equation}
  is the mean of entries in $\m{M}$ over the corresponding index set. Furthermore, $\frac{1}{\sqrt{n}}\frob{\m{T}_M - \m{M}}$ is the total within-group standard deviation of entries in $\m{M}$ over all index sets.
\end{theorem}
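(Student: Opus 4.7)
The plan is to exploit the fact that the structure of $\m{T}$ decouples the minimization across index sets, reducing the problem to a collection of independent scalar least-squares problems. First I would expand the squared Frobenius norm as a double sum over all matrix entries and then re-index the sum by grouping entries according to the partition $\{\mathcal{T}_0, \mathcal{T}_1, \ldots, \mathcal{T}_K\}$ induced by $f$. Since $t_{ij}$ takes the single value $t_k$ for every $(i,j) \in \mathcal{T}_k$, this gives
\begin{equation}
  \frob{\m{T} - \m{M}}^2 = \sum_{k=0}^{K} \sum_{(i,j) \in \mathcal{T}_k} |t_k - m_{ij}|^2,
\end{equation}
in which the $K+1$ terms share no variables. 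The global minimization therefore reduces to independently solving each inner least-squares problem in a single complex unknown $t_k$.

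Second, I would show that each inner problem is minimized uniquely by $t_k = \widebar{m}_k$. The cleanest route is to write an arbitrary candidate as $t_k = \widebar{m}_k + u$ and expand the inner sum; the cross term collapses because $\sum_{(i,j)\in\mathcal{T}_k}(m_{ij} - \widebar{m}_k) = 0$ by definition of the group mean, leaving a residual of the form $(\text{within-group SS}) + n_k |u|^2$. Nonnegativity of $n_k |u|^2$ then establishes both optimality and uniqueness of the mean (equivalently, one may treat $t_k$ and $\conj{t_k}$ as independent via Wirtinger calculus and set the partial derivative with respect to $\conj{t_k}$ to zero, obtaining the same answer). Assembling these per-group minimizers back into a single matrix yields precisely $\m{T}_M$ as defined by \eqref{eq:meanStrucMat}.

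For the final clause, I would substitute $t_{ij} = \widebar{m}_{f(i,j)}$ into the grouped expression above to obtain
\begin{equation}
  \frob{\m{T}_M - \m{M}}^2 = \sum_{k=0}^{K} \sum_{(i,j) \in \mathcal{T}_k} |m_{ij} - \widebar{m}_k|^2 = \sum_{k=0}^K n_k s_k^2,
\end{equation}
where $s_k^2$ denotes the within-group variance of $\m{M}$ over $\mathcal{T}_k$. Since the partition covers all $n^2$ entries of $\m{M}$ so that $\sum_k n_k = n^2$, dividing by $n$ produces the aggregate within-group variance under the paper's normalization, and taking the square root gives $\tfrac{1}{\sqrt{n}}\frob{\m{T}_M - \m{M}}$ as the total within-group standard deviation.

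I do not anticipate a serious technical obstacle: the whole argument is a structured application of the classical fact that the sample mean minimizes sum-of-squared deviations, made possible because the index-set structure \eqref{eq:generalStruc} separates the objective additively. The only mild subtlety worth handling explicitly is the complex-valued setting, where the cross-term cancellation (or the Wirtinger derivative) needs to be written out so that complex conjugation in \eqref{eq:frobnorm} is treated correctly; everything else is bookkeeping over the partition $\{\mathcal{T}_k\}$.
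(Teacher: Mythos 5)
Your proposal is correct and follows essentially the same argument as the paper: both reduce the squared Frobenius norm to a weighted sum of squares over the index sets, complete the square to show the group means are the unique minimizers, and identify the residual as the weighted within-group variance. The only difference is presentational — the paper packages the decoupled problems as a single quadratic form $\conj{(\ve{t} - \widebar{\ve{m}})}\m{N}(\ve{t} - \widebar{\ve{m}})$ with diagonal $\m{N}$, whereas you split the objective into $K+1$ independent scalar least-squares problems, which amounts to the same computation.
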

\begin{proof}
   Take $\widebar{m}_k$ to be the mean of entries in $\m{M}$ for the $k^{\text{th}}$ index set as in Equation \ref{eq:strucMeans}, define the vector of all such means
  \begin{equation*}
    \widebar{ \ve{m} } = \tr{ (\widebar{m}_0, \widebar{m}_1, \dots, \widebar{m}_K) }.
  \end{equation*}
  Further, denote the vector of unique $t_k$ as
  \begin{equation*}
    \ve{t} = \tr{ ( t_0, t_1, \dots, t_K ) }
  \end{equation*}
  and the diagonal matrix of $n_k$ as
  \begin{equation*}
    \m{N} = \diag (n_0, n_1, \dots, n_K).
  \end{equation*}
  As Equation \ref{eq:frobnorm} is always positive, any $\m{T}$ which minimizes $\frob{\m{T} - \m{M}}$ will also minimize $\frob{\m{T} - \m{M}}^2$. Expanding gives
  \begin{align} 
    \trace \left ( \conj{(\m{T} - \m{M})}(\m{T} - \m{M}) \right ) =  & \trace \conj{\m{M}} \m{M} \nonumber \\
                                                                     & - \trace \conj{\m{M}} \m{T} \nonumber  \\
                                                                     & - \trace \conj{\m{T}} \m{M} \nonumber  \\
    & + \trace \conj{\m{T}} \m{T}. \label{eq:ToBeOptimized}
  \end{align}
  $\conj{\m{M}} \m{M}$ is constant in $\m{T}$, so can be ignored. The latter three terms can be considered individually to give $\trace \conj{\m{T}} \m{T}= \sum_{k = 0}^{K} n_k \conj{t}_k t_k$, $\trace \conj{\m{M}} \m{T} = \sum_{k = 0}^{K} n_k t_k \conj{ \widebar{m} }_k$, and $\trace \conj{\m{T}} \m{M} = \sum_{k = 0}^{K} n_k \conj{t}_k \widebar{m}_k.$
  So we seek to minimize
  \begin{equation*}
    F(\ve{t}) = \sum_{k = 0}^{K} n_k \conj{t}_k t_k - \sum_{k = 0}^{K} n_k \conj{t}_k \widebar{m}_k - \sum_{k = 0}^{K} n_k t_k \conj{ \widebar{m} }_k,
  \end{equation*}
  which we can write in matrix form as
  \begin{align}
    F(\ve{t}) & = \conj{ \ve{t} } \m{N} \ve{t} - \conj{ \ve{t} } \m{N} \widebar{ \ve{m} } - \conj{ \widebar { \ve{m} } } \m{N} \ve{t} \nonumber \\
    & = \conj{ \left ( \ve{t} - \widebar{ \ve{m} } \right ) } \m{N}  \left ( \ve{t} - \widebar{ \ve{m} } \right ) - \conj{ \widebar{ \ve{m} } } \m{N} \widebar{ \ve{m} }. \nonumber
  \end{align}
  As $n_k > 0$ for all $k = 0, 1, \dots, K$, $\m{N}$ is positive definite, and so the quadratic form $\conj{ \ve{x} } \m{N} \ve{x}$ has a minimum of zero when $\ve{x} = \ve{0}$. Therefore $F(\ve{t})$ is minimized for $\ve{t} = \widebar{ \ve{ m } }$ and has a minimum of
  \begin{equation} \label{eq:Fmin}
    F(\widebar{ \ve{m} }) = - \conj{ \widebar{ \ve{m} } } \m{N} \widebar{ \ve{m} } = - \sum_{k = 0}^K n_k \norm{\widebar{m}_k}^2.
  \end{equation}
  So $\m{T}_M$ is the Frobenius-optimal structured matrix $\m{T}$ approximating $\m{M}$. The residual $\m{T}_M - \m{M}$ has a squared Frobenius norm of
  \begin{align}
    \frob{\m{T}_M - \m{M}}^2 & = \sum_{k = 0}^K n_k \left ( \sum_{\mathcal{T}_k} \frac{\norm{m_{ij}}^2}{n_k} - \norm{\widebar{m}_k}^2  \right ) \nonumber \\
    & = \sum_{k = 0}^K n_k \sigma_k^2 \label{eq:TMmin}
  \end{align}
  where $\sigma_k^2 = \frac{1}{n_k} \sum_{\mathcal{T}_k} \left ( m_{ij} - \widebar{m}_k \right ) ^2$ is the variance of the $m_{ij}$ in the index set $\mathcal{T}_k$. Therefore we have
  \begin{equation*}
    \frac{1}{\sqrt{n}}\frob{\m{T}_M - \m{M}} = \sqrt{ \sum_{k = 0}^K \frac{n_k}{n} \sigma_k^2} ,
  \end{equation*}
  which is the total within-group standard deviation from the structured means.
\end{proof}

The above proof applies to any matrix with structured equalities, and it implies a reduction in its space complexity from $O(n^2)$ to $O(\abs{\ve{t}})$. In the particular case of a circulant matrix, this reduction achieves an $O(n)$ representation, and the optimal structured matrix can be computed using the discrete Fourier transform.

\subsection{Circulant matrices}

Circulant matrices, denoted $\m{C}$, have an index function
\begin{equation} \label{eq:circdefn}
  f(i,j) = (i - j) \bmod n
\end{equation}
and so contain $n$ unique values denoted $t_0, t_1, \dots, t_{n-1} \in \Complex$. They see widespread use in signal processing, computation, and physical modelling both due to their close relationship with the Fourier transform and their known eigensystem \cite{chan1988optimal,gray2006toeplitz,narayanshastry2021toeplitz}. $\m{C}$ has eigenvalues $\lambda_k$ for $k = 0, 1, \dots, n-1$ given by $\lambda_k = t_0 + \sum_{l = 1}^{n-1} t_l \omega^{lk}$ and a corresponding $k^{\text{th}}$ eigenvector given by $\ve{x}_k = \tr{(1, \omega^k, \omega^{2k}, \dots, \omega^{(M-1)k} )}$ where $\omega = \exp (\frac{2 \pi i}{n})$ is the complex $n^{\text{th}}$ root of unity and $i = \sqrt{-1}$.

Much of the utility of circulant matrices arises from this eigensystem. The $n \times n$ matrix of eigenvectors of $\m{C}$ scaled to be unitary,
\begin{equation} \label{eq:DFT}
  \m{F} = \frac{1}{\sqrt{n}} [\ve{x}_0 | \ve{x}_1 | \ve{x}_2 | \dots | \ve{x}_{n-1} ] = \tr{\m{F}},
\end{equation}
is simply the discrete Fourier transform (DFT), which provides an alternate route to compute $\m{T}_M$ in the circulant case. Consider the simple approximation algorithm in \ref{algo:circ}.

\begin{algorithm}[H]
\caption{Optimal circulant approximation.}\label{algo:circ}
\begin{algorithmic}
\STATE 
\STATE {\textsc{OPTCIRC}}$(\mathbf{M})$
\STATE \hspace{0.5cm}$ \mathbf{D} \gets \diag \left ( \mathbf{F} \mathbf{M} \conj{\mathbf{F}} \right ) $
\STATE \hspace{0.5cm}$ \m{C}_D = \conj{\m{F}} \m{D} \m{F} $
\STATE \hspace{0.5cm}\textbf{return} $ \mathbf{C}_D $
\end{algorithmic}
\label{alg1}
\end{algorithm}

As $\m{D}$ is diagonal and $\m{F}$ is the matrix of circulant eigenvectors, $\m{C}_{D}$ is circulant with eigenvalues $d_{jj}$, the diagonal values of $\m{D}$. To determine the elements $(\m{C}_{D})_{ij}$ in terms of $\omega$, $\ve{x}$, and $\m{M}$, first note
\begin{equation} \label{eq:Dexplicit}
  d_{jj} = \frac{1}{n} \sum_{k = 1}^n \sum_{l = 1}^n \omega^{(k - l)(j - 1)} m_{lk}.
\end{equation}
Analogously, taking $\conj{\m{F}} \m{D} \m{F}$ gives an $i,j$ element
\begin{eqnarray}
  (\m{C}_D)_{ij} & = & \conj{\m{F}} \m{D} \m{F} \nonumber \\
                 & = & \frac{1}{n^2} \sum_{l = 1}^n \sum_{k = 1}^n m_{lk} \conj{\ve{x}}_{k-l} \ve{x}_{i - j} \nonumber \\
                 & = & \frac{1}{n}  \sum_{l = 1}^n \sum_{k = 1}^n m_{lk} I \big ( i - j \equiv k - l \bmod n \big ) \label{eq:diagonalMeans}
\end{eqnarray}
where $I(A)$ is the indicator function which returns $1$ if $A$ is true and $0$ if $A$ is false.
As $n_k = n$ for all index sets in the circulant case, Equation \ref{eq:diagonalMeans} indicates that $\m{C}_D$ is generated by replacing the values of $M$ along each circulant diagonal by the corresponding diagonal mean. Therefore $\m{C}_D = \m{T}_M$ when $\m{T}$ is restricted to be circulant and so by Theorem \ref{thm:genOptimal} $\m{C}_D$ is Frobenius optimal.

Though related results have been noted for the case where $\m{M}$ is Toeplitz by \cite{chan1988optimal} and more generally in \cite{venkatapathi2021circulant}, both of these are particular examples of the more general result of Theorem \ref{thm:genOptimal}. Specifically, Algorithm \ref{algo:circ} makes no assumptions about $\m{M}$ and can be applied to any matrix.

\section{Application to Genetic Linkage Disequilibrium} \label{sec:examples}

In modern genome-wide association studies (GWAS), assessing the impact of a region of the genome on a measured trait requires an adjustment for linkage disequilibrium (LD) \cite{LanderBotstein1989, uffelmannetal2021gwas}. Briefly, LD is the correlation between the inheritance of genetic regions, or markers, over successive generations.
For regions $i$ and $j$ on chromosomes $c_i$ and $c_j$, $LD(i,j)$ is frequently computed from an additive measure of genetic distance $d(i,j)$ (often centiMorgans or cM) using the equation
\begin{equation} \label{eq:zcorr}
 LD(i,j) = I(c_i =c_j) \gamma e^{-\frac{d(i,j)}{50}}
\end{equation}
where $\gamma$ is a constant determined by the population characteristics \cite{salahub2022structural}. As $LD(i,j)$ is constant in $d(i,j)$, this implies the LD matrix is a structured matrix for $c_i = c_j$, and zero otherwise.

In practice, LD matrices are used in multiple testing adjustment to account for correlations between individual tests of markers \cite{conneelyboehnke2007tests, hanetal2009rapid}. Occasionally, missing data complicate direct use of the LD matrix by forcing pairwise computation of the LD matrix, which is therefore not guaranteed to be positive definite and can lead to negative eigenvalues if used directly. Certain multiple testing methods assume strictly positive eigenvalues and so require some adjustment to address these negative eigenvalues, such as setting them to zero \cite{LiJi2005}.

We propose a preprocessing step to improve the robustness of pairwise LD matrices to missing data. Instead of using the pairwise LD matrix, we can leverage the known structure of linkage disequilibrium given by Equation \ref{eq:zcorr} and Theorem \ref{thm:genOptimal} to compute the optimal theoretical matrix bytaking appropriate means of the pairwise LD. This not only replaces the observed matrix with one which follows the prescribed theoretical structure, but as the operation is equivalent to taking structured means, it should reduce the impact of individual missing entries.

\subsection{Simulated data}

To test this proposal, a simple simulation study was performed. 100 synthetic populations of 100 individuals were generated measured at 20 markers with $d(i,j) = 15$ cM and $c_i = c_j = 1$ for each pair, a setting corresponding to a Toeplitz LD matrix. For each population, the complete data was used to compute the observed LD matrix and determine its eigenvalues. An increasing proportion of observations were then removed completely at random from the data to simulate different levels of data completeness, and at each proportion the pairwise LD matrix was computed and used to generate the nearest Toeplitz matrix. The minimum eigenvalue and sum of squared errors in the ordered eigenvalues from the complete data LD matrix were computed and recorded for both the pairwise LD matrix and the nearest Toeplitz matrix. Figure \ref{fig:testres} shows the results. 

\begin{figure}[!h]
  \begin{centering}
    \begin{tabular}{c}
      \includegraphics[scale = 0.75]{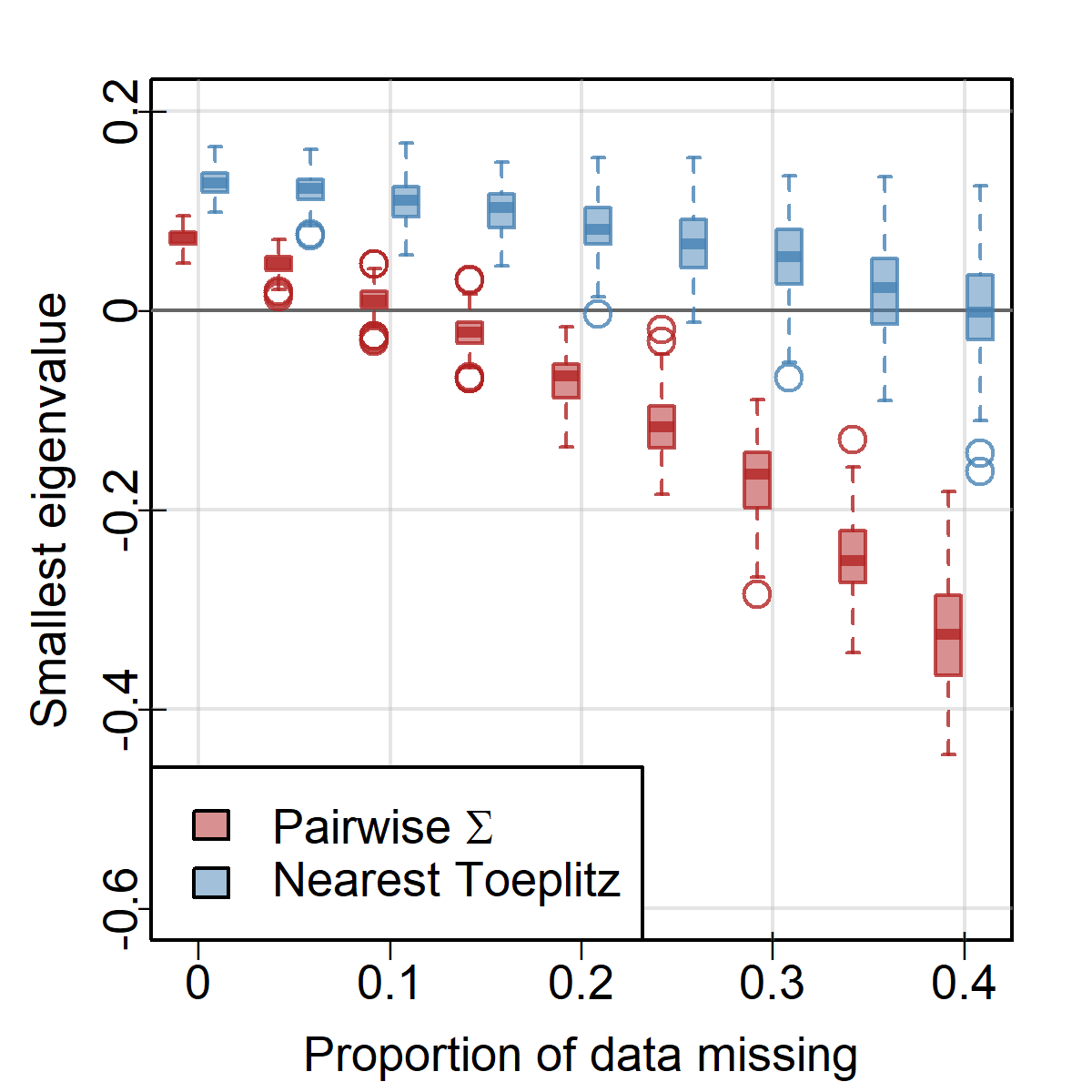} \\ 
      (a) \\
      \includegraphics[scale=0.75]{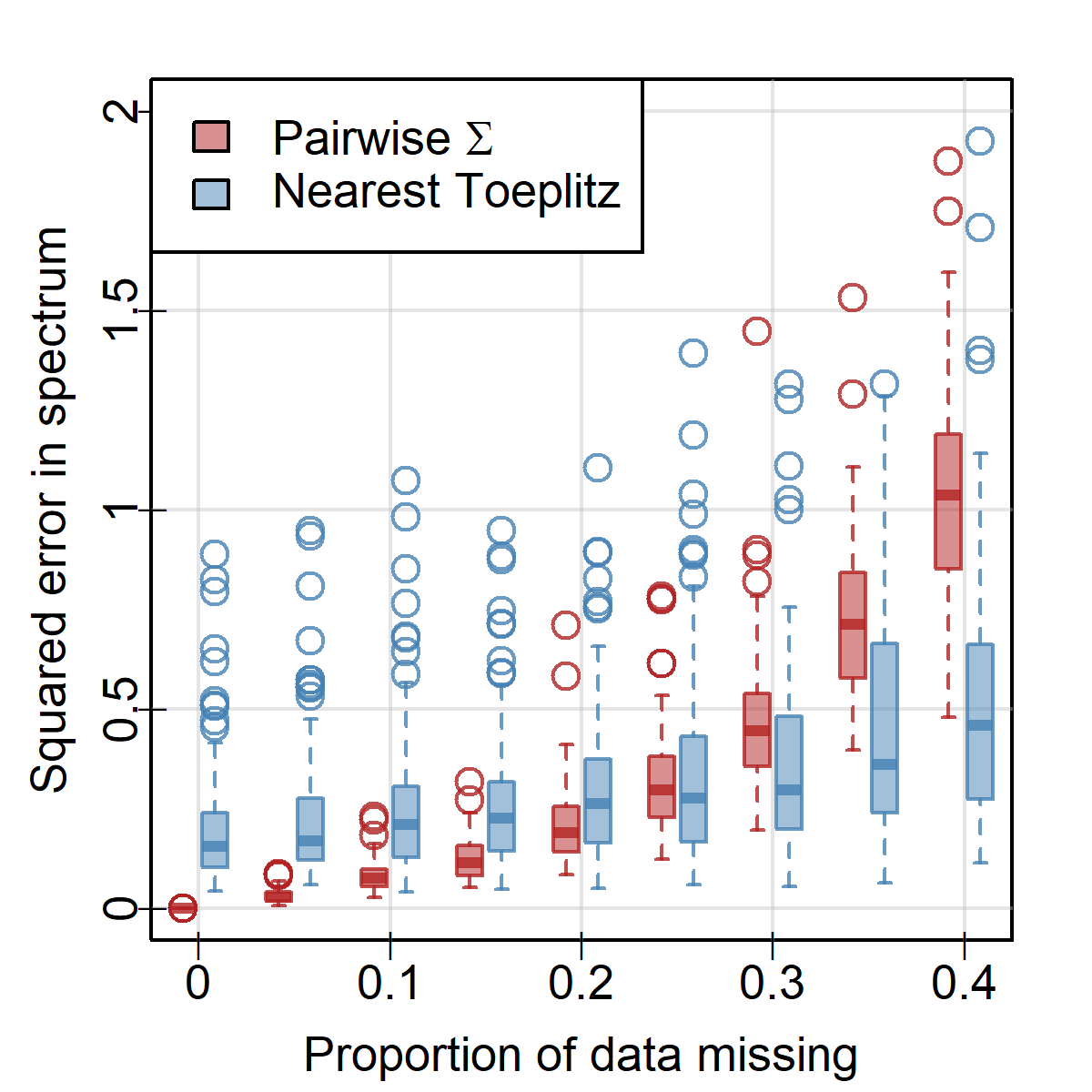} \\
      (b) 
    \end{tabular}
    \caption{Paired boxplots of the (a) minimum eigenvalues and (b) sum of squared errors in the ordered eigenvalues for the pairwise LD matrix and the nearest Toeplitz matrix by the proportion of data missing. The nearest Toeplitz, displayed to the right of the line for each pair of boxplots, is more robust to missing data than the pairwise LD matrix, displayed to the left of the line for each pair, but is biased when the data are complete.} \label{fig:testres}
  \end{centering}
\end{figure}

Figure \ref{fig:testres}(a) shows that the nearest Toeplitz is more robust to negative eigenvalues than the pairwise LD matrix alone. A vast majority of the 100 simulated populations have nearest Toeplitz matrices with no negative eigenvalues until more than a third of the data is missing, and even then only about 25\% produce negative eigenvalues. In contrast, the pairwise LD matrix has negative eigenvalues more than 75\% of the time when as little as 15\% of the data is missing. Figure \ref{fig:testres}(b) indicates this robustness also extends to the sum of squared differences between ordered eigenvalues, which does not depend greatly on the completeness of the data for the nearest Toeplitz but does for the pairwise LD. The cost of this robustness and the reduction in space complexity from $O(n^2)$ to $O(2n)$ is potential bias when the data is (mostly) complete.

\subsection{Real data}

Following the simulated study of the previous section, we now corroborate our findings by replicating the experiment with real data. The Mouse Genome Database (MGD) is a public repository of information on mouse genomics, including references which describe hundreds of thousands of markers and their measured values in several experiments on live mice \cite{bultetal2019mouse}. One of these data sets records markers measured on the backcross of a mutated strain of \emph{Mus musculus} and a strain of \emph{Mus spretus} in \cite{roweetal1994jaxbsb}. Partial measurement of 5951 markers across all chromosomes on 94 animals is reported, with complete observations present on 2624 of the markers used to create a 2624$\times$2624 observed correlation matrix. In contrast to the previous simulated example, these markers are not positioned uniformly across the chromosomes; the cM distances between adjacent markers differ greatly.

Instead of focusing on all of the markers, however, we make the real data comparable to the simulated case by considering only those markers on chromosome 1. This leaves 199 markers measured across the chromosome with adjacent distances ranging from less than 0.01 cM to 4.37 cM. The cM distances were used to generate a theoretical correlation matrix according to Equation \ref{eq:zcorr}, these theoretical correlations were rounded to two decimal places, and these rounded correlations were treated as the level sets of the theoretical structured matrix. Note that unlike the examples illustrated above, these level sets do not correspond with a named structured matrix. Nonetheless, Theorem \ref{thm:genOptimal} dictates that the optimal structured approximation in the Frobenius norm is given by means computed over the level sets. An unfortunate consequence of the lack of correspondence with a named matrix is that there is no guarantee that the result will be a valid correlation matrix.

After these steps, we have two correlation matrices: the observed matrix on the full data and the theoretical correlation matrix based on cM distances. To examine the impact of missing data on the computed eigenvalues, marker measurements were removed completely at random 100 times for each of a range of proportions of missingness and the pairwise LD matrix and the optimal structured approximation based on the theoretical level sets were generated. For both the pairwise LD matrix and the optimal structured matrix, the minimum eigenvalue and sum of squared errors from the matrix of complete observations was computed. Figure \ref{fig:mgi} displays the results.

\begin{figure}[!h]
  \begin{centering}
    \begin{tabular}{c}
      \includegraphics[scale = 0.75]{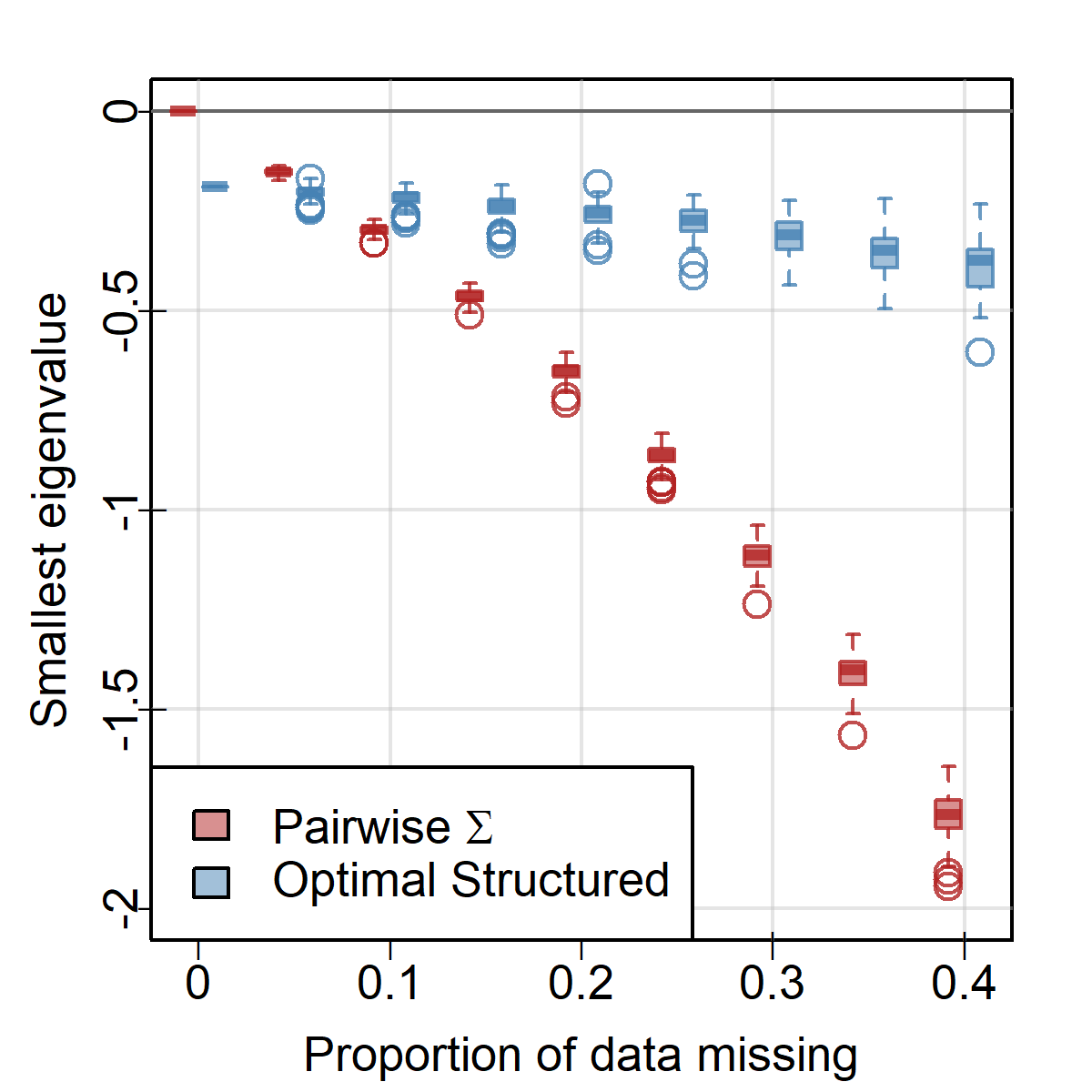} \\ 
      (a) \\
      \includegraphics[scale=0.75]{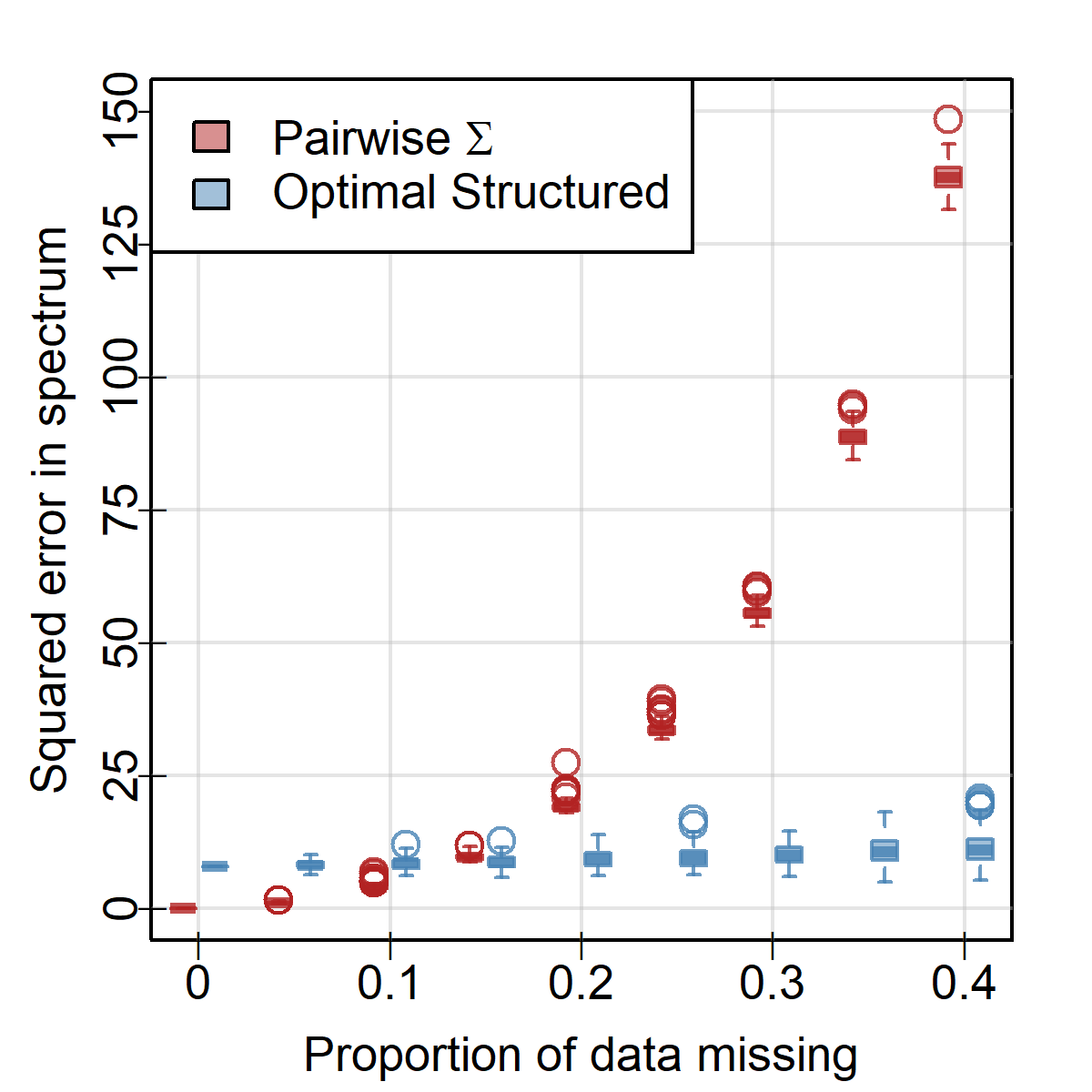} \\
      (b) 
    \end{tabular}
    \caption{Paired boxplots of the (a) minimum eigenvalues and (b) sum of squared errors in the ordered eigenvalues for the pairwise LD matrix (to the left of the corresponding line) and the optimal structured matrix (to the right of the corresponding line) by the proportion of data missing. The bias from approximation is more serious in this case than the simulated example: negative eigenvalues are produced for the complete data.} \label{fig:mgi}
  \end{centering}
\end{figure}

First, note that the larger correlation matrix for the real data example has resulted in less relative variability for our summaries of eigenvalues in both the pairwise LD and optimal structured matrices, making the difference in perfomance even starker. Just as in the simulated case, the optimal structured matrix proves far more robust to missing data. Both the minimum eigenvalue and the sum of squared errors barely change on average as the proportion of missing data increases. In contrast, the pairwise LD matrix has an error that grows in the proportion of missing data and a minimum eigenvalue which continues to decrease as data is removed. When as little as 15\% of the data is missing, the optimal structured matrix has a better distribution in both metrics.

The bias observed in the simulated case has more serious consequences in the real data, however. For mostly complete data, the optimal structured matrix produces negative eigenvalues, and so the optimal structured matrix clearly does not correspond with a correlation matrix. This limits its use for cases with nearly complete data, where an optimal structured approximation should only be used if the structure is Toeplitz, circulant, or of another class with known qualities to its eigenspectrum. Future work will examine a more sophisticated approximation in the form of a structured matrix plus a diagonal matrix. This form retains a reduced space complexity, but the diagonal components provide variables that can potentially be optimized to ensure desired properties, e.g., postive semidefiniteness for correlation matrices. 

\section{Conclusion}

In this paper, we have presented a general method for obtaining structured approximations to arbitrary matrices. Specifically, we have proved that the Frobenius norm of $\m{M}$ to a matrix $\m{T}$ with structural equality of certain entries is minimized by replacing the entries in $\m{M}$ by their means for each index set of $\m{T}$. The value of this minimum corresponds to the weighted standard deviation of entries within each index set, giving an intuitive way to measure the expected error for this approximation. In the circulant case, this is equivalent to taking the circulant matrix with eigenvalues $\diag \left ( \m{F} \m{M} \conj{\m{F}} \right )$, where $\m{F}$ is the DFT matrix.

We believe that our general structured approximation method is of significant independent theoretical and practical interest. Theoretically, it promises a simple and computationally-efficint algorithm to compute the Frobenius-optimal matrix with structured equalities to an arbitrary matrix. Evidence for its practical value comes from our example involving the creation of a structured approximation of the pairwise genetic linkage disequilibrium matrix that is highly robust to missing data. In both simulated and real data examples, the eigenvalues changed little for the optimal structured matrix when as much as 40\% of the data was missing, though at the cost of some bias when the data was nearly complete. Despite this remarkable result, it must be emphasized that the new structured approximation method is potentially applicable to any problem involving large-scale matrices for which standard linear algebraic methods cannot practically be applied.

\section{References}

\bibliographystyle{ieeetran}
\bibliography{./bibliography}

\begin{thebibliography}{10}
\providecommand{\url}[1]{#1}
\csname url@samestyle\endcsname
\providecommand{\newblock}{\relax}
\providecommand{\bibinfo}[2]{#2}
\providecommand{\BIBentrySTDinterwordspacing}{\spaceskip=0pt\relax}
\providecommand{\BIBentryALTinterwordstretchfactor}{4}
\providecommand{\BIBentryALTinterwordspacing}{\spaceskip=\fontdimen2\font plus
\BIBentryALTinterwordstretchfactor\fontdimen3\font minus
  \fontdimen4\font\relax}
\providecommand{\BIBforeignlanguage}[2]{{%
\expandafter\ifx\csname l@#1\endcsname\relax
\typeout{** WARNING: IEEEtran.bst: No hyphenation pattern has been}%
\typeout{** loaded for the language `#1'. Using the pattern for}%
\typeout{** the default language instead.}%
\else
\language=\csname l@#1\endcsname
\fi
#2}}
\providecommand{\BIBdecl}{\relax}
\BIBdecl

\bibitem{bultetal2019mouse}
C.~J. Bult, J.~A. Blake, C.~L. Smith, J.~A. Kadin, J.~E. Richardson, and
  t.~M.~G. Database~Group, ``{Mouse genome database (MGD) 2019},''
  \emph{Nucleic Acids Research}, vol.~47, no.~D1, pp. D801--D806, 2019.

\bibitem{bai2010spectral}
Z.~Bai and J.~W. Silverstein, \emph{Spectral analysis of large dimensional
  random matrices}.\hskip 1em plus 0.5em minus 0.4em\relax Springer, 2010,
  vol.~20.

\bibitem{proteomics}
K.~Chandramouli and P.~Qian, ``Proteomics: challenges, techniques and
  possibilities to overcome biological sample complexity,'' \emph{Hum Genomics
  Proteomics}, 2009.

\bibitem{natureProteo}
R.~Poulos, P.~Hains, R.~Shah, and et~al., ``Strategies to enable large-scale
  proteomics for reproducible research,'' \emph{Nature Communications},
  vol.~11, no. 3793, 2020.

\bibitem{covZhou}
Y.~Zhou, ``Set-based differential covariance testing for genomics,'' \emph{Stat
  (Int Stat Inst)}, vol.~8, no.~1, pp. 251--280, 2019.

\bibitem{biometrika}
P.~Danaher and P.~Wang, ``Covariance-based analyses of biological pathways,''
  \emph{Biometrika}, vol. 102, no.~3, pp. 533--544, 2015.

\bibitem{baik2006eigenvalues}
J.~Baik and J.~W. Silverstein, ``Eigenvalues of large sample covariance
  matrices of spiked population models,'' \emph{Journal of multivariate
  analysis}, vol.~97, no.~6, pp. 1382--1408, 2006.

\bibitem{li2012two}
J.~Li and S.~X. Chen, ``Two sample tests for high-dimensional covariance
  matrices,'' 2012.

\bibitem{7440788}
J.~Liu, Y.~Liang, and N.~Ansari, ``Spark-based large-scale matrix inversion for
  big data processing,'' \emph{IEEE Access}, vol.~4, pp. 2166--2176, 2016.

\bibitem{STRASSEN}
V.~Strassen, ``Gaussian elimination is not optimal,'' \emph{Num. Math},
  vol.~13, p. 354–356, 1969.

\bibitem{CoppersmithW90}
D.~Coppersmith and S.~Winograd, ``Matrix multiplication via arithmetic
  progressions,'' \emph{J. Symb. Comput.}, vol.~9, no.~3, pp. 251--280, 1990.

\bibitem{JV21}
J.~Alman and V.~V. Williams, ``A refined laser method and faster matrix
  multiplication,'' in \emph{Proceedings of the 2021 {ACM-SIAM} Symposium on
  Discrete Algorithms, {SODA} 2021, Virtual Conference, January 10 - 13, 2021},
  D.~Marx, Ed.\hskip 1em plus 0.5em minus 0.4em\relax {SIAM}, 2021, pp.
  522--539.

\bibitem{galactic}
S.~Robinson, ``Toward an optimal algorithm for matrix multiplication,''
  \emph{SIAM News}, vol.~38, no.~9, 2005.

\bibitem{downsample}
A.~U. Jörn~Lötsch, Sebastian~Malkusch, ``Optimal distribution-preserving
  downsampling of large biomedical data sets (opdisdownsampling),'' \emph{PLOS
  One}, 2021.

\bibitem{sparsification}
C.~Jain, ``Coverage-preserving sparsification of overlap graphs for long-read
  assembly,'' \emph{Bioinformatics}.

\bibitem{gray2006toeplitz}
R.~M. Gray, \emph{{Toeplitz and circulant matrices: A review}}.\hskip 1em plus
  0.5em minus 0.4em\relax now Publishers Inc., 2006.

\bibitem{chan1988optimal}
T.~F. Chan, ``{An optimal circulant preconditioner for Toeplitz systems},''
  \emph{SIAM Journal on Scientific and Statistical Computing}, vol.~9, no.~4,
  pp. 766--771, 1988.

\bibitem{venkatapathi2021circulant}
M.~Hariprasd and M.~Venkatapathi, ``Circulant decomposition of a matrix and the
  eigenvalues of toeplitz type matrices,'' \emph{arXiv preprint
  arXiv:2105.14805}, 2022.

\bibitem{narayanshastry2021toeplitz}
O.~Narayan and B.~S. Shastry, ``{Generalized Toeplitz--Hankel matrices and
  their application to a layered electron gas},'' \emph{Journal of Physics A:
  Mathematical and Theoretical}, vol.~54, no.~17, p. 175201, 2021.

\bibitem{LanderBotstein1989}
E.~S. Lander and D.~Botstein, ``{Mapping mendelian factors underlying
  quantitative traits using RFLP linkage maps},'' \emph{Genetics}, vol. 121,
  no.~1, pp. 185--199, 1989.

\bibitem{uffelmannetal2021gwas}
E.~Uffelmann, Q.~Q. Huang, N.~S. Munung, J.~de~Vries, Y.~Okada, A.~R. Martin,
  H.~C. Martin, T.~Lappalainen, and D.~Posthuma, ``Genome-wide association
  studies,'' \emph{Nature Reviews Methods Primers}, vol.~1, no.~1, pp. 1--21,
  2021.

\bibitem{salahub2022structural}
C.~Salahub, ``A structural model of genome-wide association studies,''
  \emph{arXiv preprint arXiv:2205.10391}, 2022.

\bibitem{conneelyboehnke2007tests}
K.~N. Conneely and M.~Boehnke, ``{So many correlated tests, so little time!
  Rapid adjustment of P values for multiple correlated tests},'' \emph{The
  American Journal of Human Genetics}, vol.~81, no.~6, pp. 1158--1168, 2007.

\bibitem{hanetal2009rapid}
B.~Han, H.~M. Kang, and E.~Eskin, ``Rapid and accurate multiple testing
  correction and power estimation for millions of correlated markers,''
  \emph{PLoS genetics}, vol.~5, no.~4, p. e1000456, 2009.

\bibitem{LiJi2005}
J.~Li and L.~Ji, ``Adjusting multiple testing in multilocus analyses using the
  eigenvalues of a correlation matrix,'' \emph{Heredity}, vol.~95, no.~3, pp.
  221--227, 2005.

\bibitem{roweetal1994jaxbsb}
L.~Rowe, J.~Nadeau, R.~Turner, W.~Frankel, V.~Letts, J.~Eppig, M.~Ko,
  S.~Thurston, and E.~Birkenmeier, ``{Maps from two interspecific backcross DNA
  panels available as a community genetic mapping resource},'' \emph{Mammalian
  Genome}, vol.~5, no.~5, pp. 253--274, 1994.

\end{thebibliography}

\end{document}